\documentclass[journal]{IEEEtran}
\setlength\columnwidth{0.5\textwidth} 
\usepackage{amssymb, amsthm, bm}
\usepackage[tbtags]{amsmath}
\usepackage{datetime}
\usepackage{graphicx}
\graphicspath{{figures-pdf/}}
\usepackage[caption=false, font=footnotesize, labelfont=rm, textfont=rm]{subfig}

\usepackage{cite} 
\usepackage{url} 
\usepackage[aboveskip=1pt]{subcaption}
\usepackage{diagbox}
\usepackage{multirow, bigstrut}
\usepackage{arydshln} 
\usepackage[mathlines]{lineno}

\usepackage{verbatim} 
\usepackage{comment} 

\usepackage{hyperref}
\hypersetup{linktocpage=true, pdfborderstyle={/S/S/W 1}, bookmarksopen=true, 
colorlinks, linkcolor=blue, anchorcolor=blue, citecolor=blue}

\newlength\OneImW
\setlength\OneImW{0.8\columnwidth}

\newlength\twofigwidth
\setlength\twofigwidth{0.45\columnwidth}

\newlength\ThreeImW
\setlength\ThreeImW{0.36\columnwidth}

\hyphenation{op-tical net-works semi-conduc-tor}
\hyphenpenalty=5000 \tolerance=400

\newtheorem{property}{Property}

\newtheorem{corollary}{Corollary}

\newcommand{\Ze}{\mathcal{Z}_{2^e}}

\newcommand{\vvert}[0]{{\, \vert \, }}

\begin{document}

\title{Graph Structure of the Generalized Tent Map over Ring $\mathbb{Z}_{2^e}$}

\author{Kai Tan, Chengqing Li

 \thanks{K. Tan and C. Li are with the School of Computer Science, Xiangtan University, Xiangtan 411105, Hunan, China (\protect\url{DrChengqingLi@gmail.com}).}  
 
 \thanks{This work was supported by the National Natural Science Foundation of China (no.~92267102).}
}    
    \markboth{IEEE Transactions}{Li \MakeLowercase{\textit{et al.}}}
    \IEEEpubid{\begin{minipage}{\textwidth}\ \\[12pt]\centering
    \\[2\baselineskip]
    0018-9448 \copyright 2024 IEEE. Personal use is permitted, but republication/redistribution requires IEEE permission.\\
    See http://www.ieee.org/publications\_standards/publications/rights/index.html for more information.\\ \today{}\space(\currenttime)
    \end{minipage}
    }

\maketitle

\begin{abstract}
The structure of functional graphs of nonlinear systems provides one of the most intuitive methods for analyzing their properties in digital domain. The generalized Tent map is particularly suitable for studying the degradation of dynamic behaviors in the digital domain due to its significantly varied dynamical behaviors across different parameters in the continuous domain. This paper quantifies the generalized Tent map under various parameters to investigate the dynamic degradation that occurs when implemented in fixed-precision arithmetic.  Additionally, the small period problem caused by the indeterminate point in the fixed-point domain is addressed by transforming the parameters. The period of the mapping in the fixed-point domain is optimized by introducing disturbances, resulting in an optimized mapping that is a permutation mapping. Furthermore, the paper reveals the dynamic degradation of analogous one-dimensional maps in fixed-precision domains and provides a more orderly design for this map.
\end{abstract}
\begin{IEEEkeywords}
Chaos, functional graph, nonlinear dynamical systems, sequence analysis, Tent map.
\end{IEEEkeywords}

\section{Introduction}


\IEEEPARstart{I}{n} numerous studies have revealed the degradation of dynamic properties of chaotic systems in digital devices. When chaotic dynamical systems are implemented in computers or digital domains, their statistical properties differ significantly from those in the continuous domain \cite {Boghosian:Pathology:2019}. 
For the logistic map in the 64-bit floating-point arithmetic domain, the map's error statistics vary with parameters and the impact of these errors on dynamic behavior is revealed \cite{Oteo:Double:2007}. Sequences generated by the Chebyshev map, Logistic map, and Tent map encounter various issues such as short periods and a limited number of unique periods when implemented in a finite-precision digital domain \cite{Li:Dynamic:2019}. Galias examined how rounding errors and the finiteness of the state space affect the properties of nonlinear maps by an efficient graph-based algorithm designed to find basins of attraction \cite{Galias:TCSI2023, Galias:TCSII2021}. To mitigate this degradation, \cite{Wang:TCSI2016} proposed a chaotic system based on a chaos generation strategy controlled by random sequences.
Additionally, \cite{OZTURK2018395} introduced a method that fills the least significant bits (LSB) of each state with random bits, although this method is only applicable to binary shift chaotic maps (BSCM). Subsequently, a high-dimensional Baker's map has been defined as a BSCM for all dimensions \cite{Ozturk:Baker:TCASI2019}. For enhancing performance and resource efficiency while maintaining chaotic behavior and strong pseudorandom properties, a new chaos-based PRNG architecture using the half-unit-biased format is proposed, advancing its applicability in instrumentation and measurement systems \cite{Silva:TIM2023}.

The Tent map, as an elementary chaotic system, is a source of some applications, such as pseudorandom bit generators (PRBGs) \cite{Valle:CSF2022,Addabbo:digital:2006}, optimization algorithms \cite{Hesb:Tent:CASII23}, analog-digital converters \cite{Basu:Algorithmic:2018} and encryption algorithm \cite{Yi:block:2002, Wang:IS2021, Hua:IS2021}. The Tent map is defined by $f_{\mu}(x)=\mu(\min(x, 1-x))$, where $\mu$ is real-valued.
As $\mu$ increases, the dynamic behavior of the tent map changes from predictable to chaotic.
If $\mu=2$, the Tent map becomes chaotic, and it is proved to be neither topologically stable, topologically stable, orbital displacement, nor countable inflate \cite{Carrasco:Stability:2020}.
However, the Tent map's simple structure is far from enough to show the complexity of a one-dimensional map in dynamic analysis.
In \cite{Jessa:period:2002}, a tent-like map, a continuous piecewise linear map, and the relevant Sawtooth map are proposed.

The graph structure of a functional graph, also called the state-mapping network, provides an intuitive approach to analyzing the dynamic properties of chaotic systems in the different digital domains, including the digital domain \cite{Li:Dynamic:2019} and the finite ring  \cite{cqli:Cheby:TIT25}. A high-performance general framework was proposed for generating 2-D chaotic maps, and the dynamic properties of these new chaotic maps in the digital domain were demonstrated through functional graphs \cite{Zhang:TIE2023}. 
The functional graph of Baker's map in the fixed-point domain shows apparent self-similarity \cite{cqli:Baker:TC25}.

Diversity is the essential character of the one-dimensional map in the fixed-precision domain for dynamical analysis.
The generalized Tent map $f: [0, 1]\rightarrow[0, 1]$ is defined by
\begin{equation}\label{eq:define_f}
  f(x)=
  \begin{cases}
    b+\frac{1-b}{a}x  & \mbox{if } 0\leq x<a;\\
    \frac{1-x}{1-a}   & \mbox{if } a\leq x<1.
  \end{cases}
\end{equation}
The different dynamic properties of this map in the continuous domain with various parameters are analyzed in \cite{Susan:Dynamics:1998}.
William studies this map with inverse limits and presents a sequence of theorems \cite{William:Interesting:2014}.

\IEEEpubidadjcol

This paper studies the properties of functional graphs generated by iterating the generalized Tent map in the digital domain. These include the disturbance of a state with the precision increase and the in-degree distribution.
Then, the perturbations at the fixed and periodic points are analyzed.
Moreover, other properties are revealed at the same time.
When the generalized Tent map~\eqref{eq:define_f} is non-chaotic, its dynamic properties degrade slightly and become smaller with higher precision in the digital domain.
And when the generalized Tent map~\eqref{eq:define_f} is chaotic, its dynamic properties become erratic in the digital domain.
The reason, in short, is that Devaney's definition of chaos depends on the infinite subset.
The definition contradicts the digital domain.
The topologically transitive means that the interval $I$ continuously extends with the iteration.
So, the dynamical perturbations are more severe under topologically transitive.
The property that periodic points are dense disappears because the state is finite.
In conclusion, chaos also disappears in the digital domain.

The rest of this paper is organized as follows. Section~\ref{sec:SMN} provides a detailed analysis and mathematical proof of some properties of the functional graph of the generated tent map in the fixed-point domain. Section~\ref{sec:Optimization} discusses the process of obtaining a mapping with improved dynamic properties by transforming the mapping parameters and introducing disturbances, and it provides a mathematical proof of its validity. The last section concludes the paper.

\section{THE functional graph OF THE GENERATED TENT MAP IN FIXED-POINT DOMAIN}
\label{sec:SMN}

A generalized Tent map, implemented in a fixed-point domain with fixed precision $e$, is defined on a discrete set. To simplify the representation of its functional graph, we scale all elements of the domain and range by a factor of $2^e$. Consequently, the discrete set of interest becomes $\{x\vvert x\in\mathcal{Z}_{2^e}\}$.
The resulting generalized Tent map, denoted as $T_e$, is given by
\begin{equation}\label{eq:define_F}
T_e(x)=
  \begin{cases}
    \mathrm{R}(2^eb+\frac{1-b}{a}x) & \mbox{if } 0\leq x< 2^ea; \\
    \mathrm{R}(\frac{2^{e}-x}{1-a}) & \mbox{if } 2^ea \leq x \leq 2^e, 
  \end{cases}
\end{equation}
where $a$ and $b$ are parameters satisfying $2^e a, 2^e b \in {0, 1, \ldots, 2^e}$, and $\mathrm{R}(\cdot)$ denotes an integer quantization function, such as floor, ceiling, or rounding.
The choice of quantization function $\mathrm{R}(\cdot)$ has only a minor effect on the quantized values and does not substantially alter the overall structure of the functional graph. Therefore, for simplicity, this paper considers only floor quantization.
To improve readability, frequently used notations are defined in Table~\ref{tab:list:symbol}.

\setlength{\tabcolsep}{3pt}
\begin{table}[!htb]
\centering
\caption{Nomenclature.}
\begin{tabular}{c|l}
    \hline
    {\bf Symbol} & {\bf Definition} \\ \hline
 $\mathcal{Z}_{2^e}$ & The set of integers    \\  \hline  
    $\mathbb{Z}$ & The set $\{0, 1, \ldots, 2^e-1\}$.   \\  \hline
   $k_{\rm l}$ & The derivative of the generalized Tent map when $0\leq x< 2^ea$.  \\ \hline
    $k_{\rm r}$  & The derivative of the generalized Tent map when $2^ea \leq x \leq 2^e$.  \\ \hline
\end{tabular}
\label{tab:list:symbol}
\end{table}

The functional graph of the generated Tent map~\eqref{eq:define_F} with precision $e$ is denoted as $\mathrm{T}_e$, which is built as follows: the $2^e+1$ possible states are viewed as $2^e+1$ nodes; a directed edge is constructed from node $x$ to node $y$ if $T_e(x)=y$. When $(a, b)=(11/2^4, 7/2^4)$, 
the functional graphs of $T_e(x)$ for$e=4, 5, 6$ are respectively illustrated in Fig.~\ref{fig:SMNf}.
Because the map is a piecewise linear function, its derivative is the main characteristic and the primary factor influencing its dynamical properties. In the functional graph, the effects of the derivative are primarily manifested in the in-degree of the nodes. This relationship is further detailed in Property~\ref{prop:dX}, which formalizes the relationship between the in-degree of the functional graph implemented by $T_e$ and the derivative of $T_e$.

\begin{figure}[!htb]
	\centering
	\begin{minipage}[t]{\ThreeImW}
		\centering
		\includegraphics[width=\ThreeImW]{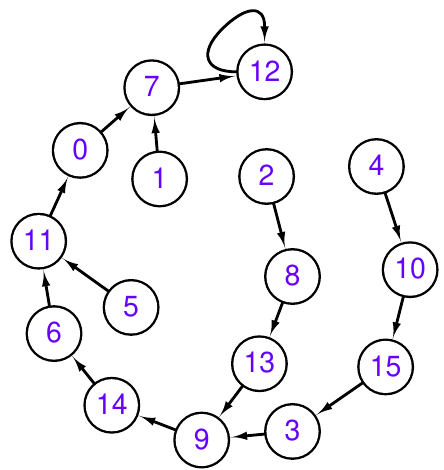}
		a)
	\end{minipage}
	\begin{minipage}[t]{\ThreeImW}
		\centering
		\includegraphics[width=\ThreeImW]{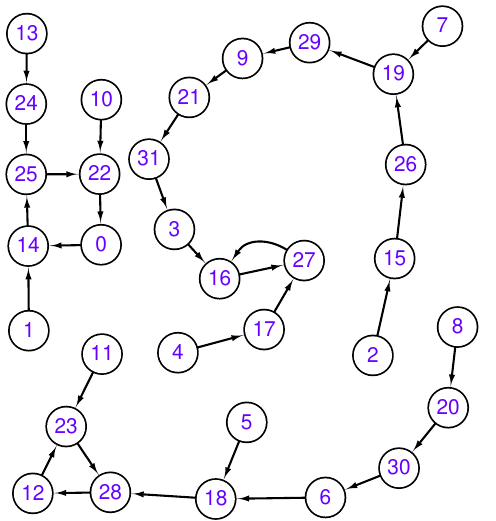}
		b)
	\end{minipage}
	\begin{minipage}[t]{0.75\OneImW}
		\centering
		\includegraphics[width=0.75\OneImW]{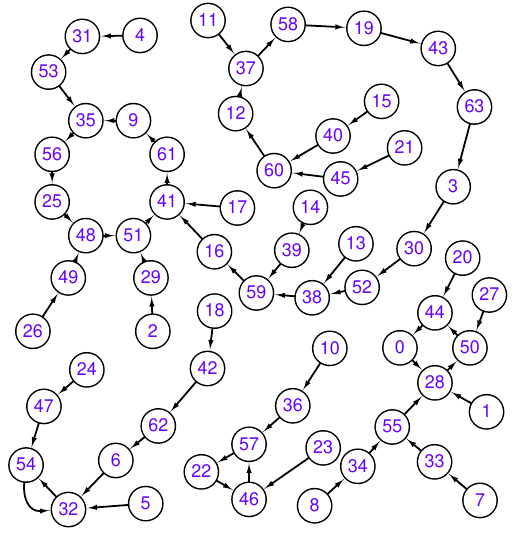}
		c)
	\end{minipage}
\caption{The structure of $\mathrm{T}_e$ with $(a, b)=(11/2^4, 7/2^4)$: 
a) $e=4$; b) $e=5$; c) $e=6$.}
\label{fig:SMNf}
\end{figure}

\begin{property}\label{prop:dX}
Let $d(y)$ denote the in-degree of node $y$ in $\mathrm{T}_e$. Then, one has
\begin{equation*}
d(y)\in
    \begin{cases}
    \{0, 1\}             & \mbox{if } y\in[0, 2^e b); \\
    \{\mathrm{R}(\frac{1}{k_{\rm l}}), \mathrm{R}(\frac{1}{k_{\rm l}})+1, \mathrm{R}(\frac{1}{k_{\rm l}})+2\} &  \mbox{if } y\in[2^e b, 2^e],      
    \end{cases}
\end{equation*}
where $k_{\rm l}=\frac{1-b}{a}$.
\end{property}
\begin{proof}
Let $\mathbf{S}(y)=\{x \vvert T_e(x)=y, x\in \mathcal{Z}_{2^e}\}$, the in-degree of node $y$ in $G_e$ is defined as 
$d(y)=|\mathbf{S}(y)|$. Since $\mathbf{S}(y)$ can be partitioned into two subsets based on whether the elements are in the interval $x\in[0, 2^ea)$ or $x\in[2^ea, 2^e]$, there is $d(y)=d_{\rm l}(y)+d_{\rm r}(y)$, where $d_{\rm l}(y)=|\{x\vvert x\in[0, 2^ea)\cap\mathbf{S}(y)\}|$ and $d_{\rm r}(y)= |\{x\vvert x\in[2^ea, 2^e]\cap\mathbf{S}(y)\}|$.
For $d_{\rm l}(y)$, two cases are divided by the range of $y$.
\begin{itemize}
    \item $y\in[0, 2^eb)$: It follows from Eq.~\eqref{eq:define_F} that there is no $x$ in $[0, 2^ea)$, hence $d_{\rm l}(y)=0$.
    
    \item $y\in[2^eb, 2^e)$: for any $x\in \{x\vvert x\in[0, 2^ea)\cap\mathbf{S}(y)\}$.
    Since Eq.~\eqref{eq:define_F} and the fact that $2^eb\in \mathcal{Z}_{2^e}$, one has $y=T_e(x)=2^eb+\mathrm{R}(k_{\rm l}x)$. This implying $\mathrm{R}(k_{\rm l}x)=y-2^eb$. Then $x \in [\frac{y-2^eb}{k_{\rm l}}, \frac{y-2^eb+1}{k_{\rm l}})\cap\mathbb{Z}$. So $d_{\rm l}(y)=|\{x \vvert x\in [\frac{y-2^eb}{k_{\rm l}}, \frac{y-2^eb+1}{k_{\rm l}})\cap\mathbb{Z}\}|\in\{\mathrm{R}(\frac{1}{k_{\rm l}}), \mathrm{R}(\frac{1}{k_{\rm l}})+1\}$.
\end{itemize}

For $d_\mathrm{r}(y)$, since Eq.~\eqref{eq:define_F}, any $x\in\{x\vvert x\in[2^ea, 2^e)\cap\mathbf{S}(y)\}$ satisfies $T_e(x)=\mathrm{R}\left(k_{\rm r}(x-2^e)\right)=y$.
Then $k_{\rm r}(x-2^e)\in[y, y+1)$, implying $x\in(\frac{y+1}{k_{\rm r}}+2^e, \frac{y}{k_{\rm r}}+2^e]\cap\mathbb{Z}$. Thus $d_{\rm r}=|\{x \vvert x\in (\frac{y+1}{k_{\rm r}}+2^e, \frac{y}{k_{\rm r}}+2^e]\cap\mathbb{Z}\}|\in\{\mathrm{R}(-\frac{1}{k_{\rm r}}), \mathrm{R}(-\frac{1}{k_{\rm r}}+1)\}$. Referring to $a\in[0, 1)$, one has $-\frac{1}{k_{\rm r}}\in(0,1)$, then $d_{\rm r}(y)\in\{0, 1\}$.
So according to $d(y)=d_{\rm l}(y)+d_{\rm r}(y)$, the property is proved.
\end{proof}

In the continuous domain, the generation Tent map has a unique and invariant point $c$ for any parameters of $a$ and $b$. This implying $f(c)=c$ with $c=\frac{1}{2-a}\in(a, 1]$. Referring to $f'(c)=-\frac{1}{1-a}<-1$, $c$ acts as a repeller making the interval $[a, f^{-1}(a))$ a repelling region, where $f^{-1}$ is the inverse function of $f$. It means that if $x\in[a, c)$, one has $c-x>c-f^2(x)$; if $x\in(c, f^{-1}(a)]$, one has $x-c>f^2(x)-c$, where \(f^i\) denotes the \(i\)-th iteration of $f$.
In the fixed-point domain, with a sufficiently large $e$, two intervals remain as repelling regions when $e$ is sufficiently large, as demonstrated by Property~\ref{prop:IP}.

\begin{property}\label{prop:IP}
When \( e \) is sufficiently large, $x$ in the intervals \([2^ea, 2^ec + \frac{k_{\rm r}}{k_{\rm r}^2-1}]\) or \((2^ec + \frac{1}{k_{\rm r}^2-1}, x']\) dictate that \( x > T^2_e(x) \) or \( x < T^2_e(x) \), respectively, where $x'=\max\{x\vvert x\geq 2^ea, T_e(x)\geq 2^ea\}$, \( k_{\rm r}=-\frac{1}{1-a} \) and \(c=\frac{1}{2-a}\).
\end{property}
\begin{proof}
When $x\in [2^ea, x']$, it follows from Eq.~\eqref{eq:define_F} that \begin{equation}\label{eq:T_e(x)in}
    T_e(x)\in(k_{\rm r}(x-2^e)-1, k_{\rm r}(x-2^e)]
\end{equation}
and $T^2_e(x)=\mathrm{R}(k_{\rm r}(T_e(x)-2^e))$, where $T^i_e(x)$ denotes the $i$-th iteration of $T_e$ applied to $x$. 
Since $k_{\rm r}<0$, one has $T^2_e(x)\in[k_{\rm r}^2(x-2^e)-2^ek_{\rm r}-1, k_{\rm r}^2(x-2^e)-2^ek_{\rm r}-k_{\rm r})$. 
Thus,
$$T^2_e(x)-x \in[\hat{x}-1, \hat{x}-k_{\rm r}),$$ 
where $\hat{x}=(k_{\rm r}^2-1)x-2^e(k_{\rm r}^2+k_{\rm r})$. Thus 
\begin{equation*}
    \begin{cases}
      x>T^2_e(x)  &\mbox{if }\hat{x}-k_{\rm r}\leq 0; \\
      x\leq T^2_e(x)        &\mbox{if }\hat{x}-1\geq 0.
    \end{cases}
\end{equation*}
Since $c=\frac{1}{2-a}$, it follows that $f(c)=c$ and $c=\frac{k_{\rm r}}{k_{\rm r}-1}$.
The condition $\hat{x}-k_{\rm r}\leq 0$ implies $x \leq\frac{2^e(k_{\rm r}^2+k_{\rm r})+k_{\rm r}}{k_{\rm r}^2-1}$.
Substitute $c$ to above inequation, one has $x\leq 2^ec+\frac{k_{\rm r}}{k_{\rm r}^2-1}$. Similarly, the condition $\hat{x}-1\geq 0$ implies $x\geq 2^ec+\frac{1}{k_{\rm r}^2-1}$.

The next step in the proof is to prove the existence of the two intervals $[2^ea, 2^ec+\frac{k_{\rm r}}{k_{\rm r}^2-1}]$ and $[2^ec+\frac{1}{k_{\rm r}^2-1}, x']$.
When $e$ is large enough, referring to $\frac{k_{\rm r}}{k_{\rm r}^2-1}$ is a constant and $c\in(a, 1]$, there is $2^ec+\frac{k_{\rm r}}{k_{\rm r}^2-1}>2^ea$.
Thus, $x\in[2^ea, 2^ec+\frac{k_{\rm r}}{k_{\rm r}^2-1}]$, one obtains $T^2_e(x)<x$.
Referring to the definition of $x'$, one deduces $T_e(x'+1)<2^ea$.
It follows from Eq.~\eqref{eq:T_e(x)in} that $k_{\rm r}(x'+1-2^e)< 2^ea$. Combining with $f(c)=c$, one obtains $k_{\rm r}(2^ec+2^e)=2^ec$, then $k_{\rm r}(x'+1-2^e)< k_{\rm r}(2^ec+2^e)+ 2^e(a-c)$
It follows from $k_{\rm r}<0$ that $x'+1>2^ec+k_{\rm r}2^e(a-c)$.
When $e$ is large enough, one has $k_{\rm r}2^e(a-c)-1$ approach $+\infty$, implying $x'>2^ec+\frac{1}{k_{\rm r}^2-1}$.
Thus, when $x\in[2^ec+\frac{1}{k_{\rm r}^2-1}, x']$, one obtains $T^2_e(x)>x$.
So, this property holds.
\end{proof}

Given the parameters $a=\frac{11}{16}, b=\frac{7}{16}$, one has determined that $c=\frac{1}{2-a}=\frac{16}{21}$ and $k_{\rm r}=-\frac{16}{5}$. For $e=4, 5, 6$, the intervals \([2^ea, 2^ec + \frac{k_{\rm r}}{k_{\rm r}^2-1}]\) are approximately $[11, 11.87]$, $[22, 24.05]$, $[44, 48.41]$, respectively.
These intervals, as shown in Fig.~\ref{fig:SMNf}, indicate that within them, the value of $x$ exceeds $T^2_e(x)$.
Additionally, the figure also illustrates the intervals \((2^ec + \frac{1}{k_{\rm r}^2-1}, x')\) where 
$x$ is less than \(T^2_e(x)\), providing a comprehensive view of the dynamic behavior of $x$ relative to $T^2_e(x)$ across different scales.

\section{The Optimization of the generalized Tent map in fixed-point domain}
\label{sec:Optimization}

The generalized Tent map with different parameters $a$ and $b$ exhibits distinct dynamical behaviors in the continuous domain. 
When $(a, b)\in\{(a, b) \vvert b<c, b<1-a\}\cup\{(a, b) \vvert b<c, b\geq 1-a, b>a\}$, the generalized Tent map exhibits chaos on the entire interval $[0, 1]$. This chaotic behavior makes the map well-suited for generating pseudo-chaotic sequences; however, it also presents several challenges, among which the short period is a key issue that needs to be addressed.
When quantizing and implementing a map in the digital domain, addressing the short period becomes crucial for constructing pseudorandom sequences. Adjusting the map's parameters can enhance the properties related to short-period, which is key to generating more effective and unpredictable pseudorandom sequences. This adjustment allows for better control over the sequence's characteristics, improving its suitability for applications in cryptography and other fields where chaotic dynamics are exploited.

Based on Property~\ref{prop:IP}, in contrast to the continuous domain, the path of nodes within the interval $(2^ec+\frac{k_{\rm r}}{k_{\rm r}^2-1}, 2^ec+\frac{1}{k_{\rm r}^2-1}]$ in the digital domain exhibits a clear dynamic degradation. This degradation is primarily due to the effects of the quantization function. Any $x$ in this interval may be a self-cycle or a node with a small period. As illustrated in Fig.~\ref{fig:SMNf}a) shown, when $e=4$, the interval is approximately $[11.87, 12.30)$ within this range, the value $12$ is a self-cycle. To avoid this problem, we can construct the generalized Tent map by the invariant point $c$. When $2^ec -\mathrm{R}(2^ec) =0.5$ with a sufficiently large $e$, interval $[2^ea, x')$ is a repelling region without short-period , as indicated by Corollary~\ref{coro:c0.5}.

\begin{corollary}\label{coro:c0.5}
When \( e \) is sufficiently large and $2^{e+1}c\in\mathbb{Z}$, the intervals \([2^ea, 2^ec )\) and \((2^ec, x')\) dictate that \( x > T^2_e(x) \) and \( x < T^2_e(x) \), respectively, where $x'=\max\{x\vvert x\geq 2^ea, T_e(x)\geq 2^ea\}$, \( k_{\rm r}=-\frac{1}{1-a} \), and \(c=\frac{1}{2-a}\).
\end{corollary}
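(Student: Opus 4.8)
The plan is to derive Corollary~\ref{coro:c0.5} from Property~\ref{prop:IP} by using the half-integrality of $2^ec$ to rule out the only integers that could escape the two repelling intervals furnished there. Property~\ref{prop:IP} already gives, for $e$ large, $x>T^2_e(x)$ on $\bigl[2^ea,\,2^ec+\tfrac{k_{\rm r}}{k_{\rm r}^2-1}\bigr]$ and $x<T^2_e(x)$ on $\bigl(2^ec+\tfrac{1}{k_{\rm r}^2-1},\,x'\bigr]$; since $\tfrac{k_{\rm r}}{k_{\rm r}^2-1}<0<\tfrac{1}{k_{\rm r}^2-1}$, these are separated only by the open ``gap'' $G=\bigl(2^ec+\tfrac{k_{\rm r}}{k_{\rm r}^2-1},\,2^ec+\tfrac{1}{k_{\rm r}^2-1}\bigr)$ straddling $2^ec$. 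Hence it suffices to show: (a) $G$ contains no integer, so that every integer of $[2^ea,2^ec)$ already lies in the first interval above and every integer of $(2^ec,x']$ in the second; and (b) for $e$ large, $[2^ea,2^ec)$ and $(2^ec,x']$ contain integers.

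For (a) I would read the hypothesis $2^{e+1}c\in\mathbb{Z}$ together with the preceding remark that $2^ec-\mathrm{R}(2^ec)=\tfrac12$, i.e.\ $2^ec=m+\tfrac12$ with $m\in\mathbb{Z}$, so the integers closest to $2^ec$ are $m$ and $m+1$, each at distance $\tfrac12$. The gap $G$ extends a distance $\tfrac{-k_{\rm r}}{k_{\rm r}^2-1}=\tfrac{|k_{\rm r}|}{k_{\rm r}^2-1}$ below $2^ec$ and $\tfrac{1}{k_{\rm r}^2-1}$ above it. Setting $t=|k_{\rm r}|=\tfrac{1}{1-a}>1$, one has $\tfrac{t}{t^2-1}\le\tfrac12\iff t^2-2t-1\ge 0\iff t\ge 1+\sqrt2$, and $\tfrac{1}{t^2-1}<\tfrac12\iff t>\sqrt3$; as $1+\sqrt2>\sqrt3$, both hold as soon as $a\ge 2-\sqrt2$. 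Under that bound both half-widths of $G$ are $\le\tfrac12$, so $G$ is an open subinterval of $[m,m+1]$ and therefore contains no integer; thus every integer $x<2^ec$ satisfies $x\le m\le 2^ec+\tfrac{k_{\rm r}}{k_{\rm r}^2-1}$ and every integer $x>2^ec$ satisfies $x\ge m+1>2^ec+\tfrac{1}{k_{\rm r}^2-1}$, which gives (a). I would add $a\ge 2-\sqrt2$ to the hypotheses (or note it is implicit); the worked example $a=\tfrac{11}{16}$ obeys it.

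For (b) I would recycle the ``$e$ large'' estimates from the proof of Property~\ref{prop:IP}: since $c=\tfrac1{2-a}>a$ (equivalently $(1-a)^2>0$), $2^e(c-a)\to\infty$, so $m\ge 2^ea$ for $e$ large, making $[2^ea,2^ec)$ nonempty; and the same argument yields $x'+1>2^ec+k_{\rm r}2^e(a-c)$ with $k_{\rm r}(a-c)>0$ a fixed constant, so $x'-2^ec\to\infty$ and $m+1\le x'$ for $e$ large, making $(2^ec,x']$ nonempty. Combining (a), (b) and Property~\ref{prop:IP} then yields $x>T^2_e(x)$ for every integer $x\in[2^ea,2^ec)$ and $x<T^2_e(x)$ for every integer $x\in(2^ec,x']$, which is the assertion.

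The main obstacle is the inequality $\tfrac{|k_{\rm r}|}{k_{\rm r}^2-1}\le\tfrac12$ in step (a): the corollary is true only when $|k_{\rm r}|$ is large enough (roughly $a\gtrsim 0.586$). When $a$ is near $\tfrac12$ the lower half-width of $G$ exceeds $\tfrac12$, the integer $m=2^ec-\tfrac12$ can fall inside $G$, and there the sign of $T^2_e(x)-x$ is no longer dictated by the interval bounds but depends on the individual floor errors $\delta_1,\delta_2$ --- so the real content is to pin down (and, if necessary, add to the hypotheses) the parameter regime in which ``$2^ec$ half-integer'' actually removes the short period.
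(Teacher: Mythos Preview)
Your route through Property~\ref{prop:IP} is not the paper's, and the obstacle you flagged is real and not removable by adding a hypothesis: the corollary is meant to hold (and does hold) for every $a\in(0,1)$ once $2^ec$ is a half-integer. The bounds in Property~\ref{prop:IP} come from the crudest control of two successive floor errors, so its gap around $2^ec$ has total width $\tfrac{1-k_{\rm r}}{k_{\rm r}^2-1}=\tfrac{1-a}{a}$; for $a<\tfrac12$ this already exceeds $1$ and must contain the integer $m=2^ec-\tfrac12$, so Property~\ref{prop:IP} is simply silent about the integers nearest $2^ec$ in that regime. One can check directly (e.g.\ $2^ec=600.5$, $|k_{\rm r}|\approx 1.42$, $a\approx 0.29$) that $T^2_e(600)=599<600$ nonetheless --- it is only the intermediate Property~\ref{prop:IP} that is too coarse to see it.

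The paper instead gives a direct reflection argument that bypasses Property~\ref{prop:IP}. From $k_{\rm r}(2^ec-2^e)=2^ec$ one rewrites $T_e(x)=\mathrm{R}\bigl(2^ec-k_{\rm r}(2^ec-x)\bigr)$. The hypothesis $2^{e+1}c\in\mathbb{Z}$ makes the reflection $x\mapsto 2^{e+1}c-x$ send integers to integers, and since $|k_{\rm r}|>1$ one gets $T_e(x)\ge 2^{e+1}c-x$ for integer $x\in[2^ea,2^ec)$ and $T_e(x)<2^{e+1}c-x$ for integer $x\in(2^ec,x']$. Composing these one-step comparisons (apply the second with $T_e(x)$ in place of $x$) yields $T^2_e(x)<2^{e+1}c-T_e(x)\le x$ on the left interval, and the symmetric inequality on the right, with no restriction on $a$. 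The idea you are missing is therefore to compare $T_e(x)$ not to $x$ through two uncontrolled floors, but to the \emph{integer} $2^{e+1}c-x$ through a single floor, and then iterate.
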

\begin{proof}
It follows from Eq.~\eqref{eq:define_F} that $T_e(x)= \mathrm{R}(-k_{\rm r}(2^e-2^ec+2^ec-x))=\mathrm{R}(2^ec-k_{\rm r}(2^ec-x))$. Since $k_{\rm r}<-1$ and $2^{e+1}c\in\mathbb{Z}$, one can obtain 
    \begin{equation*}
        \begin{cases}
            T_e(x)>2^{e+1}c-x &\mbox{if } x\in[2^e a, 2^e c);\\
             T_e(x)<2^{e+1}c-x &\mbox{if } x\in(2^e c, x'].
        \end{cases}
    \end{equation*}
    Then, when $x\in[2^ea, 2^ec)$, one has $T_e(x)\in(2^{e+1}c-x, x')$, implying $T_e(x) \in(2^ec, x']$. Therefore, by substituting $x$ with $2^{e+1}c-x$ in above relation, one obtains $T^2_e(x)< x$.  
    Similarly, When $x\in(2^ec, x')$, one has $T^2_e(x)> x$.     
\end{proof}

It follows from Corollary~\ref{coro:c0.5} that a practical approach is constructing a generalized Tent map to avoid the short period by substituting the parameter $a$ with the invariant point $c$. Then, the generalized Tent map~\eqref{eq:define_F} can be transformed into 
\begin{equation*}
\label{eq:define_F2}
  T_e(x)=
  \begin{cases}
    \mathrm{R}(2^eb+k_{\rm l}{a}x)  & \mbox{if } 0\leq x< A;\\
    \mathrm{R}\left( k_{\rm r}(x-2^e)\right) & \mbox{if } A \leq x \leq 2^e, 
  \end{cases}
\end{equation*}
where $k_{\rm r}=-\frac{(2C+1)}{2^{e+1}-2C-1}$, $c=\mathrm{R}(2^{e}c)\in\mathcal{Z}_{2^e}$, and 
$A=\min\left\{x \vvert \mathrm{R}\left( k_{\rm r}(2^{e}-x\right)\leq 2^e\right\}$.

The functional graph $G_e$ comprises several connected components in the fixed-point domain. Any path initiated from a node will ultimately converge into a cycle within each connected component. These cycles within $G_e$ constitute the stable state space of the generalized Tent map. The duration of these cycles, or the period length, is directly related to the size of the stable state space, which means that a larger stable state space typically indicates a more extended period. Moreover, the extent of the stable state space is influenced by the in-degree of the nodes within the graph. Because any node in $G_e$ only points to a node, the larger average in-degree of a non-leaf node in $G_e$ means there is little stable state space.
Based on Property~\ref{prop:dX}, the in-degree of many nodes in $[0,2^eb]$ equals zero. we decrease the number of these nodes by adjusting the $2^eb$ by $k_{\rm l}$.  
Thus, the optimized generalized Tent map is defined as
\begin{equation}\label{eq:define_F2}
  \mathcal{T}_e(x)=
  \begin{cases}
    \mathrm{R}(k_{\rm l}(x+1))  & \mbox{if } 0\leq x< A;\\
    2^e                         & \mbox{if } x= A;\\
    \mathrm{R}\left(k_{\rm r}(x-2^e)\right) & \mbox{if } A < x \leq 2^e, 
  \end{cases}
\end{equation}
where $k_{\rm r}=-\frac{(2C+1)}{2^{e+1}-2C-1}$, 
\begin{equation}\label{eq:de:kl}
    k_{\rm l}= \frac{k_{\rm r}}{k_{\rm r}+1}, 
\end{equation}
$ A=\min\left\{x \vvert \mathrm{R}\left( k_{\rm r}(x-2^e\right)< 2^e\right\}-1$, and $c=\mathrm{R}(2^{e}c)\in\mathcal{Z}_{2^e}$. Property~\ref{prop:per} indicates that the stable state space of the map encompasses the entire state space, implying the map is a permutation.

\begin{property}\label{prop:per}
The optimized generalized Tent map~\eqref{eq:define_F2} is a permutation.
\end{property}
\begin{proof}
When $c$ is given, $k_{\rm r}, k_{\rm l}$ and $a$ are determined. Let ${\rm S_r}=\{\mathcal{T}_e(x) \vvert x\in(A, 2^e]\cap\mathbb{Z}\}$ and ${\rm S_l}=\Ze\setminus{\rm S_r}\setminus\{2^e\}$.
Thus, if $\mathcal{T}_{e, {\rm r}}:(A, 2^e]\cap\mathbb{Z} \rightarrow {\rm S_r}$ and $\mathcal{T}_{e, {\rm l}}:(0, A]\cap\mathbb{Z} \rightarrow {\rm S_l}$ are bijection, the map~\eqref{eq:define_F2} is a permutation.
Note that $\mathcal{T}_e(x)$ is decrement when $x\in(A, 2^e]\cap\mathbb{Z}$, namely $\mathcal{T}_{e, {\rm r}}$ is bijection. So if $\mathcal{T}_{e, {\rm l}}$ is bijection, this property holds.

Let the sequence $\{y_i\}$ consist of elements from the set ${\rm S_l}$ arranged in descending order. Because $|{\rm S_l}|=A$, one has $i\in\{1,2,\cdots, A\}$. There is 
\begin{equation}\label{eq:x'de}
x'_i=\min\{x \vvert x\in(A, 2^e), \mathcal{T}_e(x)<y_i\}.
\end{equation}
For any $i$, it follows from $y_i\in{\rm S_l}$ and $\mathcal{T}_e(A)=2^e$ that
\begin{equation}\label{eq:yiyi+1}
    y_i\in[\mathcal{T}_e(x'_i)+1,\mathcal{T}_e(x'_i-1)-1].
\end{equation}
Thus $-\frac{y_i}{k_{\rm r}}\in(2^e-x'_i,2^e -x'_i+1+\frac{1}{k_{\rm r}}]$. Since the definition of $k_{\rm l}$ and $k_{\rm r}$, one has $\frac{y_i}{k_{\rm l}}-\frac{y_i}{k_{\rm r}}=y_i$, then 
\begin{equation}\label{eq:frac}
\frac{y_i}{k_{\rm l}} \in \left(y_i-2^e+x'_i -1 -\frac{1}{k_{\rm r}}, y_i-2^e+x'_i \right].
\end{equation}

One can prove $\mathcal{T}_{e, {\rm l}}$ is bijection by constructing a map 
$ M': \{y_i\}\rightarrow \{x_i\}$, where $x_i=\max\{x \vvert \mathrm{R}(k_{\rm l}x_i)< y_i\}$. 
For any $i$, one has $y_i\in (k_{\rm l}x_i, k_{\rm l}(x_i+1)]$, implying 
$\frac{y_i}{k_{\rm l}}\in(x_i, x_i+1]$. Since Eq.~\eqref{eq:frac}, one deduces $x_i\in(y_i-2^e+x'_i -2-\frac{1}{k_{\rm r}}, y_i-2^e+x'_i )$. It follows from $x_i\in\Ze$ that 
\begin{equation}\label{eq:xi=}
x_i=y_i-2^e+x'_i-1.
\end{equation}
According to Eq.~\eqref{eq:frac}, one has $y_i\in(k_{\rm l}x_i-\frac{k_{\rm l}}{k_{\rm r}}, k_{\rm l}(x_i+1)]$, implying that $y_i>k_{\rm l}(x_i+1)-k_{\rm l}( 1+\frac{1}{k_{\rm r}})$. It follows from Eq.~\eqref{eq:de:kl} that $y_i>k_{\rm l}(x_i+1)-1$. Combining with the definition of $x_i$, one obtains $k_{\rm l}(x_i+1)\in[y_i, y_i+1)$, implying that $\mathrm{R}(k_{\rm l}(x_i+1))=y_i$.
So, there are two steps to prove $\mathcal{T}_{e, {\rm l}}$ is a bijection. First, it remains to prove that $M'$ is the inverse map of $\mathcal{T}_{e, {\rm l}}$.
For any $i$, to prove 
\begin{equation}\label{eq:xi+1i}
    x_{i+1}= x_i+1,
\end{equation}
there are two cases divided by $x'_i=x'_{i+1}$ or not:
\begin{itemize}
    \item $x'_i=x'_{i+1}$: since Eq.~\eqref{eq:xi=}, one deduces $y_{i+1}>y_{i}+1$. Referring to Eq.~\eqref{eq:x'de} and $x'_{i+1}=x'_i$, there are $\mathcal{T}_e(x'_i)<y_i$ and $\mathcal{T}_e(x'_{i}+1)>y_{i+1}$. Thus $\mathcal{T}_e(x'_i)<y_i+1<\mathcal{T}_e(x'_{i}+1)$, implying $y_i+1\in{\rm S_r}$. From $y_i<y_{i+1}$ and $y_i\in\mathbb{Z}$ for any $i$, one has 
    \begin{equation}\label{eq:yi+1yi}
        y_{i+1}=\min\{y\vvert y>y_i, y\in{\rm S_l}\}= y_i+1. 
    \end{equation}
    since Eq.~\eqref{eq:xi=}, one deduces $x_{i+1}= x_i+1$.
    
    \item $x'_i\neq x'_{i+1}$: since Eq.~\eqref{eq:x'de} and $y_i<y_{i+1}$, one has $y_{i+1}>\mathcal{T}_e(x'_{i+1})>y_i>\mathcal{T}_e(x'_i)$. First,
One can then prove that 
\begin{equation}\label{eq:induTex'}
    \mathcal{T}_e(x'_{i}-z)=y_i+z
\end{equation}
where $z\in[1, y_{i+1}-y_i-1]\cap\mathbb{Z}$ by induction on $z$.
For $z=1$, since Eq.~\eqref{eq:x'de}, one deduces $\mathcal{T}_e(x'_{i+1}-1)\geq y_i$. Because $y_i\not\in{\rm S_r}$, one has $\mathcal{T}_e(x'_{i}-1)\geq y_i+1$. Assume $\mathcal{T}_e(x'_{i}-1)> y_i+1$, one obtains $y_i+1\not\in{\rm S_r}$. It follows from $y_i+1<2^e$ that $y_i+1\in {\rm S_l}$, this implying $y_{i+1}=y_i+1$. Then $x’_{i+1}=x'_i$, which contradict with $x’_{i+1}\neq x'_i$. Thus $\mathcal{T}_e(x'_{i}-1)= y_i+1$. 
This implies the statement holds for $z=1$. Assume $\mathcal{T}_e(x'_{i}-z)=y_i+z$ for $z=n<y_{i+1}-y_i-1$. 
Since $k_{\rm r}<-1$, one deduces $\mathcal{T}_e(x'_{i}-n-1)\geq \mathcal{T}_e(x'_{i}-n)+1$, this implying $\mathcal{T}_e(x'_{i}-n-1)\geq y_i+n+1$. Similarly to the case of $z=1$, assume $\mathcal{T}_e(x'_{i}-n-1)> y_i+n+1$, one obtains $y_{i+1}=y_i+n+1$, which contradict with $n<y_{i+1}-y_i-1$. Thus $\mathcal{T}_e(x'_{i}-n-1)= y_i+n+1$.
This implies the statement holds for $i=n+1$.
Therefore, this completes the induction of Eq.~\eqref{eq:induTex'}. Next, referring to Eq.~\eqref{eq:induTex'}, one has $\mathcal{T}_e(x'_{i}-y_{i+1}+y_i+1)=y_{i+1}-1$. Since Eq.~\eqref{eq:x'de}, one obtains $x'_{i+1}=x'_{i}-y_{i+1}+y_i+1$. Combining with Eq.~\eqref{eq:xi=}, $x_{i+1}=x'_{i}+y_i-2^e=x_i+1$.
\end{itemize}
Since the definition of $y_i$ and $\mathcal{T}_e(2^e)=0$, one deduces $y_1=\mathcal{T}_e(x'_1)+1 > 0$. Assume $\mathcal{T}_e(x'_1)> 2^e-x'_1$, there is $y\not\in{\rm S_r}$ such that $y\in[0, 2^e-x'_1]$, which contradicts with $y_1>\mathcal{T}_e(x'_1)$. 
Thus $\mathcal{T}_e(x'_1)\leq 2^e-x'_1$. Since $k_{\rm r}<-1$, one has $\mathcal{T}_e(x'_1)=\mathrm{R}(k_{\rm r}(x'_1-2^e))\geq2^e-x'_1$, then $\mathcal{T}_e(x'_1)=2^e-x'_1$. Therefore, $y_1=2^e-x'_1+1$. It follows from Eq.~\eqref{eq:xi=} that $x_1=0$. So from Eq.~\eqref{eq:xi+1i} and $i\in\{1,2,\cdots,A\}$, one deduces $x_i=i-1$. Thus, $M'$ is the inverse map of $\mathcal{T}_{e, {\rm l}}$.
Then, to prove that $M$ is a bijection, one needs to establish both injectivity and surjectivity.
To prove $M$ is injective, assume $M(y_j)=M(y_k)$, namely $x_j=x_k$.
Since Eq.~\eqref{eq:xi+1i}, one has $j=k$, implying that $y_j=y_k$. So
$M$ is injective. Next, we demonstrate that $M$ is surjective. According to the definition of $x_i$, for any $x_i$, there is $y_i$ such that $M(y_i)=x_i$. This proves that $M$ is surjective.
Since $M$ is both injective and surjective, it follows that $M$ is a bijection. So $\mathcal{T}_{e, {\rm l}}$ is a bijection. This property holds.
\end{proof} 

\begin{figure}[!htb]
	\centering
	\begin{minipage}[t]{1\twofigwidth}
		\centering
		\includegraphics[width=1\twofigwidth]{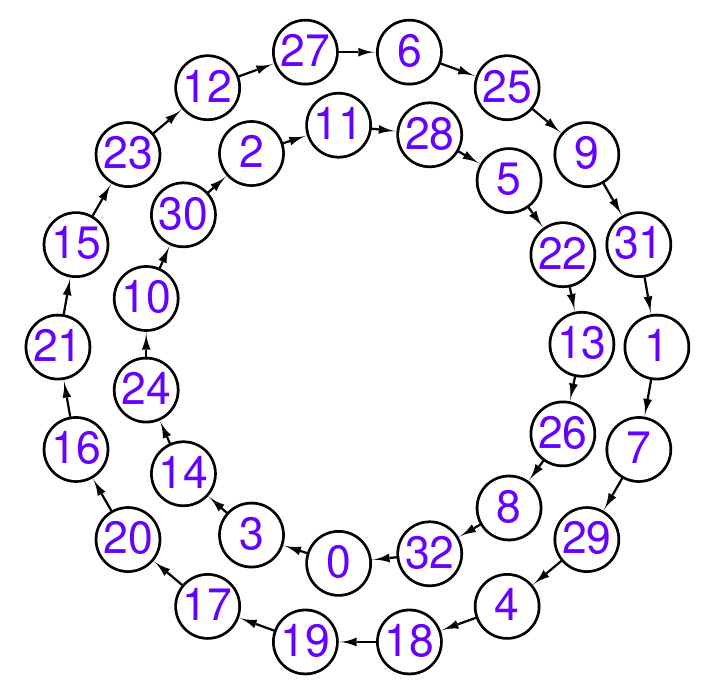}
            a)
	\end{minipage}
        \begin{minipage}[t]{1\twofigwidth}
	\centering
		\includegraphics[width=1\twofigwidth]{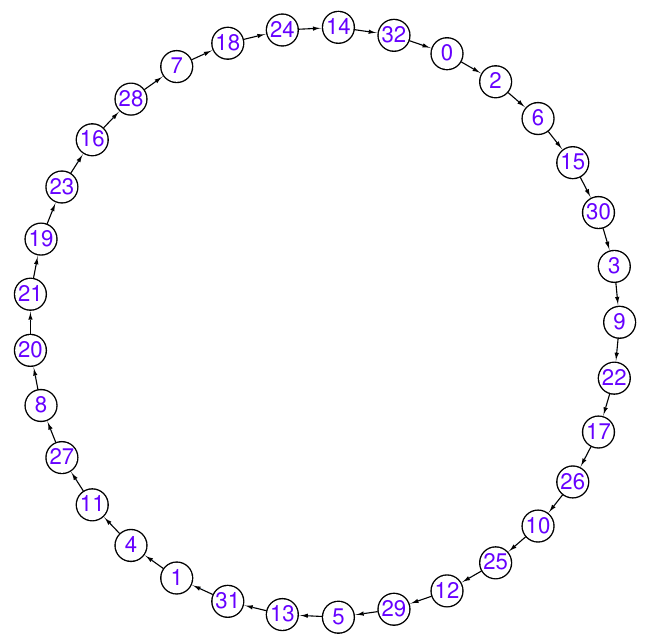}
            b)
	\end{minipage}
\caption{The structure of $\mathrm{T}_e$ with $c=18$: a) $e=5$; b) $e=20$.}
\label{fig:LE}
\end{figure}

The Lyapunov exponent characterizes the average exponential divergence rate of adjacent trajectories in phase space and is one of the numerical characteristics for identifying chaotic motion. The Lyapunov exponent of maps defined in discrete sets as the fixed-pointed domain can describe the sensitivity to initial conditions or the stability of dynamical systems. The Lyapunov exponent of the optimized mapping~\eqref{eq:define_F2} with initial value $x_0$ defined as 
 \begin{equation*}
     \lambda_{(\mathcal{T},x_0)}=\frac{1}{l_x} \sum_{n=0}^{l_x-1} \ln |\mathcal{T}(x_n + 1)-\mathcal{T}(x_n)|,
 \end{equation*}
which is proposed in \cite{Kocarev:Discrete:2006}. The discrete Lyapunov exponent of $\mathcal{T}_e(x)$ is defined as
 \begin{equation*}
     \lambda_{\mathcal{T}}=\frac{1}{2^e} \sum_{x_0=0}^{2^e}\lambda_{(\mathcal{T}, x_0)},
 \end{equation*}
The variation of the Lyapunov exponents $\lambda_{\mathcal{T}}$ for $e=16$ versus $c$ are illustrated in Fig.~\ref{fig:LE}.

\begin{figure}[!htb]
	\centering
	\begin{minipage}[t]{1\OneImW}
		\centering
			\includegraphics[width=1\OneImW]{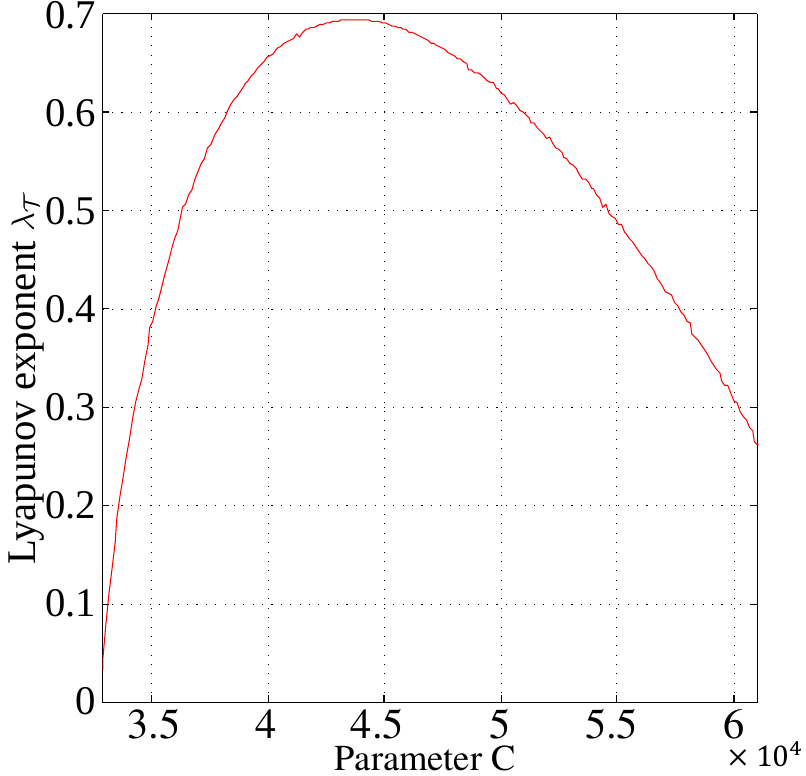}
	\end{minipage}
\caption{Dependence of Lyapunov exponent $\lambda_{\mathcal{T}}$ on parameter $c$.}
\label{fig:LE}
\end{figure}


\section{CONCLUSION}

This paper presents a comprehensive theoretical analysis of the graph structure of generalized Tent maps in the fixed-point domain. 
the graph structure is studied for its dynamic properties at indeterminate points.
It then proposes an optimized generalized Tent map, which is proven to have no short period and acts as a permutation.
It confirmed the study that implemented maps in digital devices need analysis, or the result may differ.

\bibliographystyle{IEEEtran_doi}
\bibliography{GTent}

\end{document}